\documentclass[journal]{IEEEtran}

\usepackage{amsmath}
\usepackage{booktabs}
\usepackage{xcolor}
\usepackage{amssymb}
\usepackage{amsthm}
\usepackage{fontawesome5}
\usepackage{graphicx} 
\usepackage{multirow} 
\usepackage{mathtools} 
\usepackage{comment}
\usepackage{algpseudocode}
\usepackage{algorithm}
\usepackage{booktabs}
\usepackage{siunitx}
\newtheorem{theorem}{Theorem}

\usepackage{tikz}
\usetikzlibrary{arrows.meta, positioning}

\usepackage{eso-pic}

\AddToShipoutPictureFG{%
  \AtPageLowerLeft{%
    \raisebox{.7 cm}{%
      \makebox[\paperwidth]{%
        \normalfont\rmfamily\footnotesize
        Approved for Public Release on 22 Apr 2026. Distribution Is Unlimited; Case Number: AFRL-2026-2017%
      }%
    }%
  }%
}

\ifCLASSINFOpdf
\else
\fi

\begin{document}
\title{Adversarial Robustness of Near-Field Millimeter-Wave Imaging under Waveform-Domain Attacks}


\author{Lhamo~Dorje,~Jordan~Madden,~Soamar~Homsi,~and~Xiaohua~Li
\thanks{L. Dorje, J. Madden, and X. Li are with the Department of Electrical and Computer Engineering, Binghamton University, Binghamton, NY 13902 USA (e-mail: ldorje1@binghamton.edu; jmadden2@binghamton.edu; xli@binghamton.edu).}%

\thanks{S. Homsi is with the Air Force Research Laboratory, Information Directorate, Rome, NY 13441 USA (e-mail: soamar.homsi@us.af.mil).}

\thanks{Manuscript received March xx, 2026; revised September 00, 00.}}


\maketitle
\begin{abstract}
Near-field millimeter-wave (mmWave) imaging is widely deployed in safety-critical applications such as airport passenger screening, yet its own security remains largely unexplored. This paper presents a systematic study of the adversarial robustness of mmWave imaging algorithms under waveform-domain physical attacks that directly manipulate the image reconstruction process. 
We propose a practical white-box adversarial model and develop a differential imaging attack framework that leverages the differentiable imaging pipeline to optimize attack waveforms. We also construct a real measured dataset of clean and attack waveforms using a mmWave imaging testbed. 
Experiments on 10 representative imaging algorithms show that mmWave imaging is highly vulnerable to such attacks, enabling an adversary to conceal or alter targets with moderate transmission power. Surprisingly, deep-learning-based imaging algorithms demonstrate higher robustness than classical algorithms. These findings expose critical security risks and motivate the development of robust and secure mmWave imaging systems.

\end{abstract}

\begin{IEEEkeywords}
Millimeter-wave Imaging, Adversarial Attack, Robustness, Adversarial Machine Learning, Synthetic Aperture Radar, RF Sensing
\end{IEEEkeywords}

\IEEEpeerreviewmaketitle
\section{Introduction}
\label{sec:introduction}
Near-field millimeter-wave (mmWave) imaging enables high-resolution two-dimensional (2D) or three-dimensional imaging capabilities comparable to optical systems while offering unique advantages such as all-weather operation, penetration through clothing, non-ionizing radiation, and privacy preserving. These properties have led to widespread deployment in safety- and security-critical applications, including airport passenger screening, concealed weapon detection, public surveillance, and senior care \cite{wang2019review, van2021millimeter}. 

Despite its growing adoption, the security of the mmWave imaging pipeline itself, especially the image reconstruction algorithms, remains largely unexplored. This gap is particularly critical for high-resolution near-field mmWave imaging systems. Although existing research on radar or radio frequency (RF) sensing security has extensively studied signal-level attacks such as jamming, interference, and spoofing
\cite{han2025rf}, it primarily focuses on far-field point-target tracking rather than high-resolution imaging. In parallel, adversarial machine learning has extensively studied perturbations in the digital image domain \cite{mei2024comprehensive, chen2025analyzing}. However, it primarily focuses on the adversarial robustness of downstream imaging recognition models. It does not address whether an adversary can manipulate physical sensing waveforms to alter the reconstructed images. 

This paper addresses this critical gap by studying the adversarial robustness of mmWave imaging algorithms when adversaries inject physical signal waveforms to cause the reconstruction of fake but visually plausible mmWave images. These adversarial attacks aim to compromise the image reconstruction, thereby evading the detection of both human and automated analysis. Our central premise is that the mmWave imaging pipeline can be implemented as a differentiable mapping from received waveforms to reconstructed images, which enables the use of gradient-based optimization to design adversarial waveforms that achieve specific attack objectives. 

As a preliminary study, our work in \cite{dorje2024evasive} demonstrated that simple waveform injection can produce camouflaged imaging results. This stealthy attack raises immediate security concerns; for example, an adversary carrying a weapon could potentially pass airport screening by deploying a carefully designed attack device near the imaging system. However, it remains unclear whether such vulnerabilities are inherent to mmWave imaging in general or limited to specific algorithms. This motivates a systematic evaluation across a diverse set of imaging algorithms.

The contributions of this paper are summarized as follows:
\begin{enumerate}
\item To the best of our knowledge, this paper presents the first systematic evaluation of adversarial robustness for high-resolution near-field mmWave imaging algorithms at the image reconstruction stage under waveform-domain physical attacks.



\item This paper establishes a new adversarial study framework in adversarial machine learning, contributing a strong yet realistic white-box threat model, an end-to-end differentiable imaging pipeline, and a differential imaging attack (DIA) algorithm for optimizing physical attack waveforms.


\item A new dataset consisting of real measured imaging and attack waveforms is constructed. The adversarial robustness of 10 representative mmWave imaging algorithms is evaluated under various attack strategies. The results demonstrate that mmWave imaging is not robust to adversarial attacks and, contrary to common belief, learning-based algorithms appear more robust than classical algorithms.   
\end{enumerate}



The remainder of this paper is organized as follows. Section \ref{sec:literaturereview} reviews related work. Section \ref{sec:operators} presents the mmWave imaging model and representative reconstruction algorithms. Section \ref{sec:threat_model} introduces the adversarial model and the proposed attack framework. Section \ref{sec:experiments} describes the experimental setup and presents the evaluation results. Section \ref{sec:conclusion} concludes the paper. 
Source code and dataset are publicly available at \footnote{https://github.com/ldorje1/Differential-Imaging-Attacks-on-Near-Field-SAR-Imaging}. 

\section{Literature Review} \label{sec:literaturereview}

RF sensing and imaging \cite{kong2024survey} play important roles across a wide range of applications, including medical imaging \cite{chao2012millimeter}, senior care \cite{gurbuz2019radar}, and security and surveillance \cite{chetty2011through}. Remote sensing using satellites or unmanned aerial vehicles (UAVs) has been widely adopted for ecological monitoring \cite{dahhani2022land}, crisis management \cite{ge2020review}, and public surveillance \cite{robin2016review}. Most of these systems operate in the far field and provide relatively low-resolution imaging, suitable for tasks such as point-target localization, human activity recognition, gesture recognition, and large-scale geological mapping. 

In contrast, near-field mmWave imaging, enabled by short wavelengths and short sensing distances, can achieve high-resolution imaging comparable to optical cameras \cite{yanik2019near}. It has been widely used in applications such as passenger screening in airports, train stations, and government facilities \cite{wang2019review}, \cite{bian2024towards}, as well as concealed weapon detection \cite{sheen2001three, garcia2018combining}. High-resolution imaging typically requires a large sensing aperture. Fortunately, low-cost implementations can employ a single-antenna sensor that is mechanically scanned over a large aperture to collect measurements, followed by image reconstruction based on the synthetic aperture radar (SAR) principle \cite{bian2024towards, li2021lightweight}. A variety of mmWave imaging algorithms have been developed, ranging from classical analytical methods such as the range migration algorithm to modern deep neural network-based approaches \cite{wang2019review}. 

Adversarial attacks on RF sensing and imaging systems have been extensively studied \cite{han2025rf}. However, most existing works focus on disrupting signal reception through jamming, interference, or spoofing \cite{psiaki2016gnss, morong2019study}. Furthermore, they primarily consider far-field, low-resolution radar sensing or mmWave systems \cite{kim2024systematic}, \cite{yan2016can, manesh2019cyber, sun2021control, nallabolu2022emulation, ordean2022millimeter, hunt2023madradar, xie2024universal}. It has even been suggested in \cite{vennam2023mmspoof} that high-resolution mmWave imaging may mitigate such attacks; in contrast, this paper demonstrates the opposite. 

As a subfield of adversarial machine learning, numerous studies have examined the robustness related to radar and remote sensing images \cite{ruan2024survey_sar_adv, akhtar2018threat}. These studies are typically conducted in the digital pixel domain and aim to disrupt downstream target recognition or classification models, rather than the image reconstruction process itself. A clear gap remains in understanding the adversarial robustness of near-field, high-resolution mmWave imaging systems under attacks conducted in the physical signal domain to directly manipulate the imaging algorithms.


\section{Near-Field High-Resolution mmWave Imaging}
\label{sec:operators}

\begin{figure}[tbp]
\centerline{
\includegraphics[width=0.55\textwidth, trim=3pt 3pt 3pt 3pt, clip]{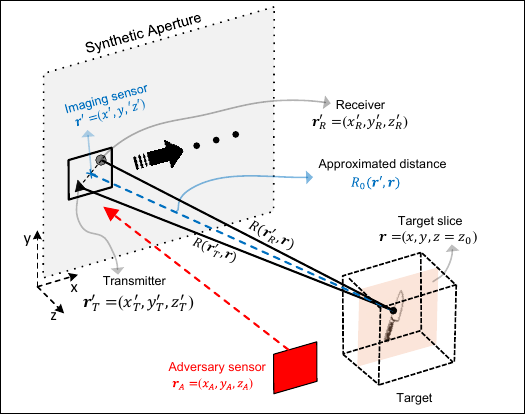}
}
\caption{System model of the near-field mmWave imaging scenario under adversarial attack. The \emph{imaging sensor} performs synthetic aperture scanning to reconstruct an image of the \emph{target}, while a nearby \emph{adversarial sensor} injects attack waveforms into the receiver to induce a fake image.}
\label{fig:sar_sat_img_2}
\end{figure}

\subsection{Imaging Model}
\label{subsec:forward}

As illustrated in Fig.~\ref{fig:sar_sat_img_2}, we consider a near-field mmWave imaging system equipped with an imaging sensor. The sensor sequentially scans the scene and employs the SAR principle to reconstruct the target image. An adversary is assumed to be present in proximity to the scene but remains unknown to the imaging system. During the sensing process, the adversary injects carefully designed waveforms into the receiver, thereby perturbing the measured signals and altering the reconstructed image.



The imaging sensor is assumed to have one transmit antenna and one receive antenna, and it moves on a planar aperture at $z=0$. At each sensing location $\mathbf{r}'=(x', y', z'=0)$, referred to as an \emph{aperture look} $\ell$, the sensor transmits a sensing waveform and records the corresponding echo signal. The objective of the imaging algorithm is to estimate the 2D reflectivity distribution of a target located on the plane $\mathbf{r}=(x, y, z=z_0)$ from the collection of received echoes.

In near-field imaging, the separation between the transmit and receive antennas cannot be neglected. At aperture look $\ell$, the sensor transmits a frequency-modulated continuous-wave (FMCW) signal 
$p(t)$ from position $\mathbf{r}'_{\mathrm{T}}=(x'_{\mathrm{T}}, y'_{\mathrm{T}}, z'_{\mathrm{T}}=0)$, and the echo is recorded at $\mathbf{r}'_{\mathrm{R}}=(x'_{\mathrm{R}}, y'_{\mathrm{R}}, z'_{\mathrm{R}}=0)$. The transmitted waveform is given by
\begin{equation}
p(t)=\exp\!\left\{j2\pi\!\left(f_0 t+\tfrac{1}{2}K t^2\right)\right\}, 
\label{eq:clean_tx_wave}
\end{equation}
where $f_0$ denotes the start frequency and $K$ is the chirp slope.

The received echo signal at aperture look $\ell$ can be modeled as
\begin{equation}
{s}_{\ell}(t)
=
\iiint \sigma(\mathbf{r})\,p\!\big(t-\tau_{\ell}(\mathbf{r})\big)\,d\mathbf{r}
\;+\;
\nu_{\ell}(t),
\label{eq:rf_mix_new}
\end{equation}
where $\sigma(\mathbf{r})$ represents the reflectivity of the target at location $\mathbf{r}$, $\tau_{\ell}(\mathbf{r})$ is the propagation delay from the transmit antenna to $\mathbf{r}$ and then to the receive antenna, and $\nu_{\ell}(t)$ denotes noise, clutter, and interference.

To facilitate algorithmic processing, the imaging region is discretized into $N$ voxels centered at $\{\mathbf r_n\}_{n=0}^{N-1}$, and measurements are collected over $L$ aperture locations $\{\mathbf r'_\ell\}_{\ell=0}^{L-1}$. From \eqref{eq:rf_mix_new}, the standard linear imaging model can be derived \cite{yanik2019near, zeng2025finufft_mmwave} (See Appendix \ref{appendix:yHxmodel} for deduction):
\begin{equation}
\mathbf y = \mathbf H\,\boldsymbol{\alpha} + \mathbf v, 
\label{y=Hxmodel}
\end{equation}
where $\mathbf{y} \in\mathbb C^{L}$ is the complex measurement vector, $\boldsymbol{\alpha} \in\mathbb C^{N}$ denotes the discretized reflectivity, $\mathbf v \in \mathbb C^{L}$ represents noise, and $\mathbf H\in\mathbb C^{L\times N}$ is the propagation matrix with entries
\begin{equation}
H_{\ell n}=
\frac{\exp\!\big(j\,2k\,R_0(\mathbf r'_\ell,\mathbf r_n)\big)}
     {R_0^{2}(\mathbf r'_\ell,\mathbf r_n)},  
\label{eq:H_matrix}
\end{equation}
where $R_0(\mathbf r'_\ell, \mathbf r_n)$ denotes the distance between the sensing location $\mathbf r'_\ell$ and the voxel location $\mathbf r_n$, and $k=2\pi f_0/c$ with $c$ being the speed of light. 

\subsection{Imaging Algorithms}  \label{subsec:imgalg}



The complete imaging pipeline can be expressed as a compositional operator
\begin{equation}
   \mathcal{G}(\cdot) =(\mathcal{G}_{\mathrm{img}} \circ \mathcal{G}_{\mathrm{pre}})(\cdot),
   \label{eq:differentiable_operator}
\end{equation}
where $\mathcal{G}_{\mathrm{pre}}(\cdot)$ represents signal pre-processing (i.e., extracting $\mathbf{y}$ from received waveforms $s_\ell(t)$), and $\mathcal{G}_{\mathrm{img}}(\cdot)$ performs image reconstruction which is essentially estimating $\boldsymbol{\alpha}$ from $\mathbf y$ given $\mathbf H$.

In this paper, we categorize image reconstruction algorithms into two classes: \emph{classical}, and \emph{learning-based}.

The first class consists of classical (non-deep-learning) approaches. One important category is the analytical reconstruction algorithms with closed-form expressions. A canonical example is the time-domain Back-Projection Algorithm ({\bf BPA}), which reconstructs the image via
\begin{equation}
\hat{\boldsymbol{\alpha}} = \mathcal{G}_{\mathrm{img}}(\mathbf{y}) = \mathbf{H}^{\mathsf{H}}\mathbf{y},
\label{eq:physics_bpa}
\end{equation}
where $(\cdot)^\mathsf{H}$ denotes the Hermitian transpose \cite{ccetin2014sparsity}.

Due to the high computational cost of explicitly forming $\mathbf{H}$, frequency-domain implementations are often employed when both sensing and imaging grids are regular. This leads to the Range-Migration Algorithm ({\bf RMA}), which leverages fast Fourier transform (FFT) operations for efficient computation \cite{sheen2001three}. 
For near-field scenarios, the Matched-Filtering Algorithm ({\bf MFA}) has been specifically designed \cite{yanik2019near, yanik2020millimeter}. For irregular sensing geometries, the Light-weight Imaging Algorithm ({\bf LIA}) reconstructs the image iteratively with reduced complexity \cite{li2021lightweight}.

Another important category in this first class formulates image reconstruction as an inverse problem with regularization. A representative example is the Compressive-Sensing-based Algorithm ({\bf CSA}) \cite{ccetin2014sparsity, wei2014sparse}, which solves
\begin{equation}
\hat{\boldsymbol{\alpha}} = \arg\min_{\boldsymbol{\alpha} \in \mathbb{C}^{N}} 
\frac{1}{2}\big\|\mathbf{y} - \mathbf{H}\boldsymbol{\alpha}\big\|_2^2 
+ \lambda\,\mathcal{R}(\boldsymbol{\alpha}),
\label{eq:cs_generic}
\end{equation}
where $\mathcal{R}(\cdot)$ encodes structural priors (e.g., sparsity), and $\lambda$ controls the regularization strength.

The second class consists of learning-based approaches inspired by deep learning. One important category is algorithm unrolling, where iterative solvers for \eqref{eq:cs_generic} are converted into finite-depth computational graphs. A representative example is the joint Range-Migration and Iterative Shrinkage Threshold algorithm ({\bf RMIST}) \cite{wang2021rmist}, which iteratively updates
\begin{equation}
\boldsymbol{\alpha}^{(k+1)} = \mathrm{Prox}_{\theta_k}\!\Big( 
\boldsymbol{\alpha}^{(k)} 
- \mu_k\,\mathbf{H}^{\mathsf{H}}\big(\mathbf{H}\boldsymbol{\alpha}^{(k)} - \mathbf{y}\big) 
\Big),
\label{eq:unrolled_generic}
\end{equation}
where $\mu_k$ is the step size and $\mathrm{Prox}_{\theta_k}$ denotes a learned proximal operator. Another important category is the fully data-driven approaches that employ deep neural networks (DNNs) to directly map measurements or intermediate reconstructions to final images. Examples include Vision Transformer-based models ({\bf ViT}) \cite{smith2022vision} and U-Net-based architectures such as {\bf Deep2S}, {\bf Deep2SP+}, and {\bf CV-Deep2S} \cite{manisali2024efficient}. 

\section{Waveform-Domain Adversarial Attack}
\label{sec:threat_model}

\subsection{Adversarial Model}
\label{subsec:objective_capabilities}

To evaluate adversarial robustness under a worst-case scenario, we adopt a strong yet practical {\bf white-box adversary model} characterized as follows:

\begin{itemize}
\item \emph{Adversary knowledge:} The adversary is assumed to have full knowledge of the imaging system, including sensing locations, sensing timing, and transmitted waveforms. The adversary can also observe or reconstruct the signal waveforms received by the imaging sensor. With this information, the adversary can design and transmit attack waveforms that are synchronized in both time and frequency with the sensing process.

\item \emph{Adversary constraints:} The adversary has no direct access to or control over the internal components of the mmWave imaging system. In addition, the adversary is subject to power constraints, either in terms of per-look transmit power or total transmit power across all sensing locations.
\end{itemize}

It is reasonable to assume that the adversary possesses substantial knowledge of the imaging system. 
For example, technical specifications of commercial mmWave imaging systems are often publicly available through regulatory filings. Many systems rely on unsecured communication protocols, making it feasible for an adversary to intercept, record, and analyze sensing waveforms. 

The assumption that the adversary knows the imaging sensor's received signal waveform $s_\ell(t)$ in \eqref{eq:rf_mix_new} is strong but realistic in many practical scenarios. For example, the adversary may install a passenger screening system identical to the systems deployed in airports. This allows it to replicate and observe the received echo signals $s_{\ell}(t)$. Nevertheless, this strong assumption is not necessary if attack objective can be reduced, which we will demonstrate in experiments.

\subsection{Attack Signal Waveform}
\label{subsec:rf_injection}

As illustrated in Fig.~\ref{fig:sar_sat_img_2}, the adversary injects an attack waveform $p_\ell^{\mathrm{atk}}(t)$ directly into the receiver of the imaging sensor. The resulting received signal becomes the superposition of the clean echo \eqref{eq:rf_mix_new}, the injected attack signal, and noise:
\begin{equation}
{s}_{\ell}(t)
=
\iiint \underbrace{\sigma(\mathbf{r})\,p\!\big(t-\tau_{\ell}(\mathbf{r})\big)\,d\mathbf{r}}_{\text{clean echo}}
\;+\;
\underbrace{p^{\mathrm{atk}}_{\ell}(t)}_{\text{attack signal}}
\;+\; \nu_{\ell}(t).
\label{eq:rf_mix_new2}
\end{equation}

After standard signal processing, the corresponding discrete measurement model becomes
\begin{equation}
\mathbf y = \mathbf H\,\boldsymbol{\alpha} + \mathbf{y}^{\mathrm{atk}} + \mathbf{v},
\label{eq:x_vic_linear}
\end{equation}
where $\mathbf{y}^{\mathrm{atk}}$ denotes the contribution of the injected attack signal.

Since the imaging system is unaware of the attack, it reconstructs the image as
\begin{equation}
\hat{\boldsymbol{\alpha}}_{\mathrm{adv}} \;=\; \mathcal{G}_{\mathrm{img}}(\mathbf y).
\label{eq:full_chain}
\end{equation}
As a result, the reconstructed image $\hat{\boldsymbol{\alpha}}_{\mathrm{adv}}$ deviates from the clean reconstruction $\hat{\boldsymbol{\alpha}}$ due to the presence of $\mathbf{y}^{\mathrm{atk}}$.

\begin{theorem} \label{theorem1}
Consider the aperture look $\ell$ when the imaging sensor is located at phase-center $\mathbf r'_{\ell}$, the target is located at $\mathbf{r}$ and the attack sensor is located at ${\mathbf r}_A$. If the attack signal waveform is
\begin{equation}
p^{\mathrm{atk}}_\ell(t) = w_\ell\, p(t)\,e^{-j2\pi\delta_\ell t},
\label{eq:atk_tx_new}
\end{equation}
with the optimizable complex gain $w_{\ell}\in\mathbb{C}$ and fixed frequency offset 
\begin{equation}
 \delta_\ell = K(\tau_\ell(\mathbf{r}) - \tau_\ell^{\mathrm{atk}}(\mathbf{r}_A)),
\end{equation}
where $\tau_\ell(\mathbf{r})$ and $\tau_\ell^{\mathrm{atk}}(\mathbf{r}_A)$ are the propagation delays of the clean sensing signal and the attack signal, respectively, then the attack signal can be successfully received and processed by the imaging sensor, and the $\ell$th element of $\mathbf{y}^{\mathrm{atk}}$ in \eqref{eq:x_vic_linear} is given by
\begin{equation}
   y_\ell^{\mathrm{atk}} = w_\ell e^{j2kR(\mathbf{r}_\ell', \mathbf{r}_A)}.    
\label{eq:atk_sig_after_proc}
\end{equation}
\end{theorem}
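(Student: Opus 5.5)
The plan is to push the attack waveform \eqref{eq:atk_tx_new} through the same FMCW pre-processing $\mathcal{G}_{\mathrm{pre}}$ that produces $\mathbf y$ from the clean echo. We then compare the result term by term with the clean beat signal used in Appendix~\ref{appendix:yHxmodel} to derive \eqref{y=Hxmodel}.

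First, recall the dechirp step. The received signal is mixed with the conjugate of the reference chirp $p(t)$, and the result is low-pass filtered. For a point reflector at $\mathbf r$, the clean echo $\sigma\,p(t-\tau_\ell(\mathbf r))$ yields a beat tone
\begin{equation*}
\sigma\exp\{-j2\pi(f_0\tau_\ell + K\tau_\ell t - \tfrac12 K\tau_\ell^2)\}.
\end{equation*}
Its frequency is $K\tau_\ell(\mathbf r)$. After range-FFT and selection of the corresponding range bin, the phase term $e^{-j2\pi f_0\tau_\ell}$ becomes the $e^{j2kR}$ factor in $H_{\ell n}$, with a sign convention as in the appendix. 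The residual video phase $\tfrac12K\tau_\ell^2$ is neglected or compensated, as in the standard derivation.

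Second, apply the same operation to the attack signal. It arrives at the receiver after delay $\tau^{\mathrm{atk}}_\ell(\mathbf r_A)$, i.e.\ as $w_\ell\,p(t-\tau^{\mathrm{atk}}_\ell)e^{-j2\pi\delta_\ell(t-\tau^{\mathrm{atk}}_\ell)}$. Dechirping gives a tone at frequency $K\tau^{\mathrm{atk}}_\ell+\delta_\ell$. By the choice $\delta_\ell=K(\tau_\ell(\mathbf r)-\tau^{\mathrm{atk}}_\ell(\mathbf r_A))$, this equals $K\tau_\ell(\mathbf r)$. The attack tone therefore lands exactly in the target's range bin, which is the sense in which it is ``successfully received and processed''. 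This is the key step.

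Third, collect the constant phases of this tone: $e^{-j2\pi f_0\tau^{\mathrm{atk}}_\ell}$, the $\tfrac12K(\tau^{\mathrm{atk}}_\ell)^2$ term, and $e^{j2\pi\delta_\ell\tau^{\mathrm{atk}}_\ell}$. These are deterministic functions of $\ell$ and of the known geometry. The quantity $2\pi f_0\tau_\ell(\mathbf r)$ is interpreted through the monostatic phase-center approximation as $2kR(\mathbf r'_\ell,\cdot)$. All remaining $\ell$-dependent constant phases, and any amplitude factors, are absorbed into the free complex gain $w_\ell$, redefining $w_\ell$ if necessary. What remains is $y^{\mathrm{atk}}_\ell=w_\ell e^{j2kR(\mathbf r'_\ell,\mathbf r_A)}$, with the sign conventions matched to those of \eqref{eq:H_matrix}.

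The main obstacle is this last bookkeeping step. The absorption must be legitimate, meaning the extra phases must not depend on $t$, so that they are genuinely constants per look. The sign and factor-of-two conventions must also reproduce exactly the exponent in \eqref{eq:atk_sig_after_proc}, given that the attacker's path is one-way while the clean echo is round-trip. I would handle this by writing the phase of the attack tone explicitly and verifying that its time-dependence vanishes. I would then choose $w_\ell$ to carry the one-way versus two-way discrepancy, i.e.\ multiply the remaining phase by $e^{j2kR}e^{-j(\text{actual phase})}$, which is valid since $w_\ell$ is arbitrary in $\mathbb C$.
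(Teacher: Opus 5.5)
Your proposal is correct and is essentially the paper's proof. The paper runs the same FMCW algebra in the opposite direction. It uses the identity $p(t-\beta_\ell)=p(t)e^{j\varphi_\ell}e^{-j2\pi\delta_\ell t}$ with $\beta_\ell=\tau_\ell(\mathbf r)-\tau^{\mathrm{atk}}_\ell(\mathbf r_A)$ to show that the received attack is a weighted replica $c_\ell\,p(t-\tau_\ell(\mathbf r))$ of an echo at the target delay, so the clean dechirp derivation applies verbatim; you instead dechirp the frequency-offset form directly and check that the beat frequency is $K\tau^{\mathrm{atk}}_\ell+\delta_\ell=K\tau_\ell(\mathbf r)$. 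Both routes then absorb the leftover $t$-independent per-look phase into the free gain $w_\ell$, including the one-way versus round-trip discrepancy, which the paper leaves implicit in ``following the same processing steps''.
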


\begin{proof}
See Appendix \ref{app:delay_equiv}.
\end{proof}


\subsection{Differential Imaging Attack (DIA)}
\label{sec:differentiable_operator}

The objective of the attack is to optimize the complex weights $w_\ell$ such that the resulting {\bf adversarial image} $\hat{\boldsymbol{\alpha}}_{\mathrm{adv}}$ appears visually plausible but differs from the {\bf true image} $\hat{\boldsymbol{\alpha}}$. 

The optimal attack weights $w_\ell$ can in principle be derived analytically based on \eqref{eq:x_vic_linear} and \eqref{eq:atk_sig_after_proc}. However, in order to avoid calculating the big matrix $\mathbf{H}$, a more general and flexible approach is to treat the entire imaging pipeline \eqref{eq:differentiable_operator} as a differentiable mapping and perform gradient-based optimization.

Define the attack parameter vector
\begin{equation}
    {\bf w} = \left[\begin{array}{ccc} w_0, & \cdots, & w_{L-1} \end{array} \right].
\end{equation}
The adversary solves the following constrained optimization problem:
\begin{equation}
\mathbf w^\star
=
\arg\min_{\mathbf w}
\;
\left\|
\hat{\boldsymbol{\alpha}}_{\mathrm{adv}}(\mathbf w)-
\hat{\boldsymbol{\alpha}}_{\mathrm{tgt}}
\right\|_2^2
+\lambda\,\|\mathbf w\|_2^2,
\label{eq:attack_opt_argmin}
\end{equation}
where $\hat{\boldsymbol{\alpha}}_{\mathrm{adv}}(\mathbf w)$ denotes the reconstructed image under attack weights $\mathbf w$, $\hat{\boldsymbol{\alpha}}_{\mathrm{tgt}}$ is the adversary-chosen {\bf target image}, and $\lambda$ controls the attack power.

Under the white-box model, the adversary can synthesize the received signal \eqref{eq:rf_mix_new2} and calculate $\hat{\boldsymbol{\alpha}}_{\mathrm{adv}}(\mathbf w)$ for any candidate $\mathbf w$. The optimization problem \eqref{eq:attack_opt_argmin} can therefore be solved using gradient descent, analogous to targeted adversarial attacks in machine learning.

The regularization term $\lambda \|\mathbf{w}\|_2^2$ enforces a constraint on the attack power. Let $P_s = \int | p(t)|^2 dt$ denote the sensing signal power and $P_a = \int | p_\ell^{\mathrm{atk}}(t) |^2 dt$ denote the attack power. Then,
\begin{equation}
    \| \mathbf{w} \|_2^2 = \frac{P_a}{P_s}.   
\label{eq:power_ratio}
\end{equation}
Additional constraints, such as per-look power limits $|w_\ell|^2$, can also be incorporated.

When the attacker uses \eqref{eq:attack_opt_argmin} to optimize the attack signal \eqref{eq:atk_tx_new}, the method is referred to as the Differential Imaging Attack ({\bf DIA}). Alternatively, the attacker may choose $\mathbf{w}$ randomly without optimization. In that case, the attack may only produce random (which is in fact blank) adversarial images rather than a targeted adversarial image $\hat{\boldsymbol{\alpha}}_{\mathrm{tgt}}$. In both cases, the adversary may further refine $\mathbf{w}$ online using intercepted signals during the sensing process.

The overall attack procedure is summarized in Algorithm~1.

\begin{algorithm}[t]
\caption{Waveform-Domain Adversarial Attack}
\begin{algorithmic}[1]
\State \textbf{Before Attack}:

\State \hspace*{1em}{\it Attacker}: Acquire imaging system parameters. Either select a target image $\hat{\boldsymbol{\alpha}}_{\mathrm{tgt}}$ and use the DIA algorithm to pre-compute the optimal $\mathbf{w}^\star$ \eqref{eq:attack_opt_argmin}, or initialize $\mathbf{w}^\star$ randomly.

\State \textbf{During Attack}: At each aperture look $\ell$,
 
\State \hspace*{1em} \emph{Imaging Sensor}: Transmits sensing signal $p(t)$ \eqref{eq:clean_tx_wave}
\State \hspace*{1em} \emph{Attacker}: Intercepts the sensing signal, analyzes it, refines $\mathbf{w}^\star$ using DIA, and transmits the attack signal $p_\ell^{\mathrm{atk}}(t)$ \eqref{eq:atk_tx_new}

\State \hspace*{1em} \emph{Imaging Sensor}:
Receives the echo signal $s_{\ell}(t)$ \eqref{eq:rf_mix_new2}

\State \textbf{Image Reconstruction}:
 \State \hspace*{1em}{\it Imaging Sensor}: Processes $s_\ell(t)$ to obtain $\mathbf{y}$ and reconstructs the image $\hat{\boldsymbol{\alpha}}_{\mathrm{adv}}$ \eqref{eq:x_vic_linear}-\eqref{eq:full_chain}

\label{pseudo_algo}
\end{algorithmic}
\end{algorithm}

\section{Experiments and Simulations} \label{sec:experiments}

We adopt a \emph{measurement-driven simulation} framework to evaluate the adversarial robustness of mmWave imaging algorithms. Specifically, we first construct a dataset consisting of real-measured clean sensing waveforms and attack waveforms. We then use this dataset to simulate imaging pipelines and adversarial attacks, enabling systematic and reproducible evaluation across multiple algorithms and attack strategies.

Such a hybrid approach is necessary because fully hardware-based evaluation is currently not available. Implementing waveform-domain attacks requires stringent time and frequency synchronization between the imaging sensor and the attacker, which is not supported by the low-cost mmWave hardware available today \cite{miura2019low}. Real demonstration of FMCW radar attacks has been reported in only several papers \cite{miura2019low, sun2021control, nallabolu2022emulation, ordean2022millimeter}, and all of them use custom-built sensors and are limited to simple point-target location rather than imaging. Moreover, high-resolution SAR imaging requires raster scanning over a large aperture, which can take hours per image, making large-scale hardware experiments prohibitively time-consuming.

\subsection{Real-Measured Dataset Acquisition}
\label{sec:hardware_acquisition}

As shown in Fig.~\ref{fig:test_bed}, we build a mmWave imaging and attack testbed using two Texas Instruments TI IWR1843Boost radar sensors. One operates as the imaging sensor, while the other acts as the adversarial transmitter.

The imaging sensor is mounted on a two-axis translation stage controlled by a host PC for configuration and data acquisition. The radar operates in the $76$--$81$\,GHz band. The stage performs raster scanning over a planar synthetic aperture of size $250\,\mathrm{mm} \times 250\,\mathrm{mm}$ with step sizes $\Delta x = 1$\,mm and $\Delta y = 2$\,mm, resulting in $250 \times 125 = 31{,}250$ aperture looks. At each look, $256$ fast-time samples are recorded at a sampling rate of $5$\,MHz. The target is positioned at approximately $z_0 \approx 230$\,mm from the sensor. The adversarial sensor is placed near the target and transmits signals toward the imaging receiver.

To obtain an attack signal waveform, two back-to-back measurement runs are conducted. In the first run, only the imaging sensor transmits, producing a clean received waveform. In the second run, the imaging sensor transmits again while the attack sensor injects a time-aligned attack waveform, resulting in a received signal that contains both the clean echo and the attack component. The attack waveform is then extracted from these two measurements. Let the received signals from the two runs be $s_1(t)$ and $s_2(t)$, respectively. We estimate the optimal scaling factor $a$ and time shift $d$ by solving
\begin{equation}
    \{a^\star, d^\star\} = \arg\min_{\{a, d\}} E[ \| s_2(t)-as_1(t-d) \|^2], 
\end{equation}
where $E[\cdot]$ denotes expectation.
The attack waveform is then obtained as $s_2(t)-a^\star s_1(t-d^\star)$.

In practice, extracting high-quality attack waveforms is significantly more challenging than collecting clean data due to synchronization requirements. Many trials are often needed to obtain a usable attack signal waveform.

Using this setup, we collected clean imaging data for more than 10 distinct targets (more than $312{,}500$ waveforms) and approximately $40{,}000$ attack waveforms.

Example waveforms are shown in Fig.~\ref{fig:test_bed_waveform}. In the frequency domain, the clean signal exhibits a single dominant peak corresponding to the true target at 0.01 cycles/sample. In contrast, the attacked signal shows an additional peak at 0.075 cycles/sample, which corresponds to a \emph{false target} introduced by the adversarial waveform. The extracted attack waveform isolates this artificial component, confirming that the injected signal can emulate a physically consistent target response.

\begin{figure}[tbp]
\centerline{
\includegraphics[width=0.48\textwidth, trim=1pt 1pt 1pt 1pt, clip]{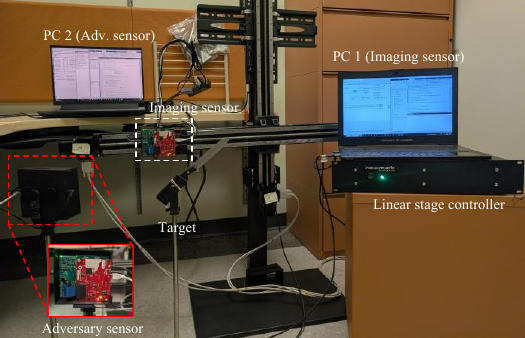}
}
\caption{Experimental testbed for data acquisition. A TI IWR1843Boost mmWave radar mounted on a two-dimensional translation stage performs synthetic aperture scanning of the target, while a second radar acts as an adversarial transmitter.}
\label{fig:test_bed}
\end{figure}

\begin{figure}[tbp]
\centerline{
\includegraphics[width=0.48\textwidth, trim=5pt 1pt 5pt 8pt, clip]{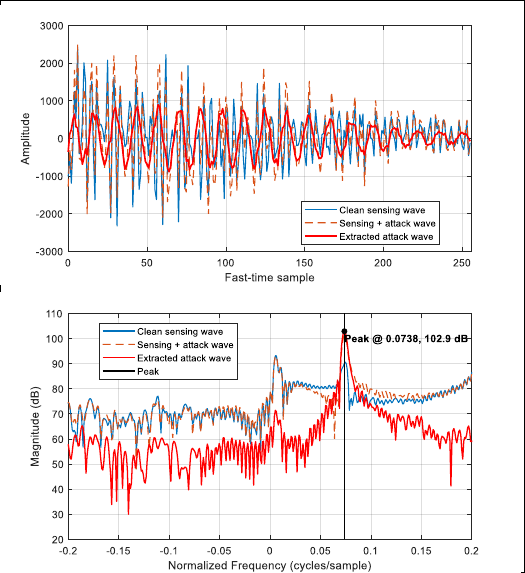}
}
\caption{Example of measured waveforms and their spectra. Top: time-domain signals including the clean echo, the combined sensing-plus-attack waveform, and the extracted attack waveform. Bottom: corresponding normalized magnitude spectra. The attack introduces an additional spectral peak, which corresponds to a physically plausible but false target.}
\label{fig:test_bed_waveform}
\end{figure}

\subsection{Adversarial Attack Simulation Setup}

We evaluate adversarial robustness by simulating imaging and attack pipelines (Algorithm 1) using the real-measured dataset.

We implement 10 representative imaging algorithms:
Five classical algorithms ({\bf BPA}, {\bf LIA}, {\bf RMA}, {\bf MFA}, {\bf CSA}) and five learning-based algorithms ({\bf RMIST}, {\bf ViT}, {\bf Deep2S}, {\bf Deep2SP+}, {\bf CV-Deep2S}). See Section~\ref{subsec:imgalg} for details.
For each algorithm, we implement both the standard (non-differentiable) version used by the imaging system, and a differentiable version used by the attacker for gradient-based optimization. 

For learning-based algorithms, we use publicly available implementations when possible. We made extensive efforts to look for learning-based methods with publicly available code or pretrained models. RMIST and ViT provide source code but no pretrained weights, so we retrained them using our simulated mmWave dataset \cite{dorje2026millisarimagenet}. The only work providing both code and pretrained models is \cite{manisali2024efficient}, from which we adopt Deep2S, Deep2SP+, and CV-Deep2S. However, these three models were originally trained and thus could only work on simulated data rather than real measured mmWave data.

For attacking, we evaluate three attack strategies:
\begin{itemize}
\item {\bf Target Conceal Attack}: The attacker aims to suppress the true target. A blank image is used as $\hat{\boldsymbol{\alpha}}_{\mathrm{tgt}}$, and DIA \eqref{eq:attack_opt_argmin} is used to compute the optimal $\mathbf{w}^\star$;

\item {\bf Target Swap Attack}: The attacker replaces the true target with a fake one. A randomly selected image from the dataset is used as $\hat{\boldsymbol{\alpha}}_{\mathrm{tgt}}$, and DIA is used to compute $\mathbf{w}^\star$;

\item {\bf Randomization Attack}: The attacker uses randomly selected weights $\mathbf{w}$ without DIA optimization.
\end{itemize}

We quantify adversarial robustness or attack effectiveness using Peak Signal-to-Noise Ratio (PSNR), Structural Similarity Index Measure (SSIM), and attack power ratio $P_a/P_s$ in \eqref{eq:power_ratio}. Specifically, we report
\begin{itemize}
\item PSNR$_{AC}$, SSIM$_{AC}$: adversarial vs. clean image
\item PSNR$_{AT}$, SSIM$_{AT}$: adversarial vs. target image
\end{itemize}
Successful attacks reduce PSNR$_{AC}$ and SSIM$_{AC}$ while increasing PSNR$_{AT}$ and SSIM$_{AT}$.
Adversarial robustness implies the opposite behavior. 

For each imaging algorithm, we evaluate performance across all 10 true target objects in the dataset. For each object, the three attack strategies are applied repeatedly using 10 different adversarial target images. Consequently, each performance metric is computed from 100 measurements, from which both the mean and variance are reported.



\begin{table*}[t]
\centering
\caption{Summary of adversarial robustness results. The table reports average PSNR, SSIM, and attack power ratio $P_a/P_s$, averaged over 5 classical and 5 learning-based algorithms, respectively. Global Average is the average of each column.}

\label{tab:summary_results}
\small
\setlength{\tabcolsep}{5pt}
\renewcommand{\arraystretch}{1.08}
\begin{tabular}{c|c||cc|cc|c}
\toprule
Imaging & Attack & \multicolumn{2}{c|}{Adv vs Clean Image} &  \multicolumn{2}{c|}{Adv vs Target Image} & Power  \\
Alg.  &  Strategy & PSNR$_{AC}$(dB) & SSIM$_{AC}$ & PSNR$_{AT}$(dB) & SSIM$_{AT}$ & $P_a/P_s$ \\
\midrule
\footnotesize
   &  Conceal & 18  & .13  & 40 & .85 & .23 \\
Classical & Swap & 10  & .28  & 54 & .95 & .38  \\
   & Random & 3  & .01 &  -  & -  & 10   \\
\midrule 
    & Conceal & 14  & .18 & 24 & .57  & 4.6 \\
Learning & Swap & 18 & .50 & 30 & .83  & 1.5 \\
    & Random & 14  & .21  & - & - & 10  \\
\midrule \midrule
\multicolumn{2}{c||} {Global Average} & 13 & .22 & {\bf 37} & {\bf .80}  & - \\

\bottomrule

\end{tabular}
\end{table*}

\subsection{Results}

\paragraph{Overall performance}
Due to the large volume of data, we first present the summary Table \ref{tab:summary_results} that summarizes the average PSNR and SSIM across all algorithms. Taking a look at the global averages, it is clear that adversarial images $\hat{\boldsymbol{\alpha}}_{\mathrm{adv}}$ are highly similar to the attacker-specified target images $\hat{\boldsymbol{\alpha}}_{\mathrm{tgt}}$ (PSNR$_{AT}=37$\,dB, SSIM$_{AT}=0.80$) while being significantly different from the clean images $\hat{\boldsymbol{\alpha}}$ (PSNR$_{AC}=13$\,dB, SSIM$_{AC}=0.22$). Importantly, the attacks require only moderate power. In many cases, the attack power is comparable to or only slightly higher than the sensing signal power, indicating that the attacks are successful and the imaging algorithms are \emph{not robust} to adversarial attacks.

Another key observation is that learning-based imaging algorithms exhibit \emph{higher robustness} than classical algorithms. In both Target Conceal and Target Swap attacks, higher attack power (4.6 and 1.5) is required by learning-based algorithms than classical algorithms (0.23 and 0.38). Despite this higher power, the resulting PSNR$_{AT}$ and SSIM$_{AT}$ values are lower, suggesting reduced attack effectiveness.


\paragraph{Target Conceal Attack} Detailed results for the target conceal attack are presented in Tables \ref{tab:target_conceal_nopower} and \ref{tab:target_conceal_power}. The former corresponds to unconstrained attack power (used to calculate the ``Conceal'' averages in the summary table), while the latter enforces a constraint of $P_a/P_s=0.1$. Example images are shown in Fig.~\ref{fig:conceal_atk}. 

Results in Table \ref{tab:target_conceal_nopower} show that the attacks effectively conceal the targets. The adversarial images deviate significantly from the clean images and closely resemble the target images, as also illustrated in Fig.~\ref{fig:conceal_atk}. 

Table \ref{tab:target_conceal_power} shows that attacks remain effective against classical algorithms even with limited power (10\,dB below sensing power). However, for learning-based algorithms, the attack becomes less effective under the same power constraint, as PSNR$_{AT}$ and SSIM$_{AT}$ are lower than PSNR$_{AC}$ and SSIM$_{AC}$. This indicates that the adversarial images remain closer to the clean images than to the target images. Overall, learning-based algorithms exhibit greater robustness, requiring higher attack power or resulting in lower attack quality.

\paragraph{Target Swap Attack} Results are shown in Tables \ref{tab:target_swap_nopower} and \ref{tab:target_swap_power}, corresponding to unconstrained and constrained ($P_a/P_s=0.1$) settings, respectively. The data in Table \ref{tab:target_swap_nopower} are used to calculate ``Swap'' averages in the summary Table. Example images are shown in Fig.~\ref{fig:swap_atk}. The results confirm that the attack successfully replaces the true target with the adversarial target. Learning-based methods again demonstrate higher robustness compared to classical methods. 

As a special note, PSNR and SSIM in this case are computed over a region of interest (ROI) containing the object, rather than the entire image. This is because the background structure and dynamic range are similar across images, making full-image metrics less sensitive to differences that are visually apparent.

\paragraph{Randomization Attack with Random Weights $\mathbf{w}$}
Results are presented in Table \ref{tab:target_conceal_random}, with example images in Fig.~\ref{fig:random_attack_success}. With moderately increased attack power ($10$\,dB), the attack successfully leads to blank reconstructed images that differ significantly from the true images.

\begin{table*}[t]
\centering
\caption{Performance of the Target Conceal Attack without power constraints (mean $\pm$ std). The adversarial images achieve low similarity to clean images and high similarity to target images, indicating effective concealment.}
\label{tab:target_conceal_nopower}
\small
\setlength{\tabcolsep}{5pt}
\renewcommand{\arraystretch}{1.08}
\begin{tabular}{l|ccccc}
\toprule
Alg. & PSNR$_{AC}$ & SSIM$_{AC}$ & PSNR$_{AT}$ & SSIM$_{AT}$ & $P_a/P_s$ \\
\midrule
\footnotesize
BPA       & $17.20 \pm 3.07$ & $0.0664 \pm 0.0561$ 
 & $83.68 \pm 24.29$ & $1.0000 \pm 0.0000$ & $0.0614 \pm 0.0690$ \\
LIA       & $17.99 \pm 3.07$ & $0.1156 \pm 0.0971$ & $22.51 \pm 2.83$ & $0.8896 \pm 0.0770$ & $0.0611 \pm 0.0686$ \\
RMA       & $20.98 \pm 3.40$ & $0.2615 \pm 0.1213$ & $28.30 \pm 2.26$ & $0.8726 \pm 0.0689$ & $0.4464 \pm 0.3014$ \\
MFA       & $18.47 \pm 3.30$ & $0.1201 \pm 0.0797$ & $23.99 \pm 0.95$ & $0.5027 \pm 0.0689$ & $0.4809 \pm 0.3014$ \\
CSA       & $18.73 \pm 2.85$ & $0.0790 \pm 0.0550$ & $42.84 \pm 8.47$ & $0.9985 \pm 0.0017$ & $0.0793 \pm 0.1062$ \\ \hline

RMIST & $16.49 \pm 1.41$ & $0.1770 \pm 0.1133$ & $27.50 \pm 3.81$ & $0.7881 \pm 0.1235$ & $12.4791 \pm 2.7903$ \\

ViT & $13.71 \pm 1.49$ & $0.0481 \pm 0.0152$ & $25.99 \pm 2.15$ & $0.7771 \pm 0.0243$ & $1.8767 \pm 0.6811$ \\
Deep2S & $13.47 \pm 1.97$ & $0.1980 \pm 0.1458$ & $23.05 \pm 5.71$ & $0.5088 \pm 0.2439$ & $2.2098 \pm 1.7449$ \\
Deep2SP+ & $15.05 \pm 1.45$ & $0.2863 \pm 0.1117$ & $22.44 \pm 3.98$ & $0.4476 \pm 0.2004$ & $1.5752 \pm 0.9503$ \\
CV-Deep2S & $12.85 \pm 2.59$ & $0.2063 \pm 0.1437$ & $18.73 \pm 5.52$ & $0.3441 \pm 0.2073$ & $4.8752 \pm 3.5686$ \\
\bottomrule
\end{tabular}
\end{table*}

\begin{table*}[t]
\centering
\caption{Performance of the Target Conceal Attack under a power constraint ($P_a/P_s = 0.1$). Classical imaging algorithms remain highly vulnerable, whereas learning-based methods exhibit partial robustness, as reflected by reduced similarity to the target image.}
\label{tab:target_conceal_power}
\small
\setlength{\tabcolsep}{5pt}
\renewcommand{\arraystretch}{1.08}
\begin{tabular}{l|ccccc}
\toprule
Alg. & PSNR$_{AC}$ & SSIM$_{AC}$ & PSNR$_{AT}$ & SSIM$_{AT}$ & $P_a/P_s$ \\
\midrule
\footnotesize
BPA & $17.65 \pm 2.78$ & $0.0833 \pm 0.0566$ & $76.59 \pm 36.25$ & $0.9739 \pm 0.0572$ & $0.10$ \\
LIA & $18.49 \pm 2.56$ & $0.1354 \pm 0.0930$ & $21.77 \pm 3.25$ & $0.8591 \pm 0.1113$ & $0.10$ \\
RMA & $22.11 \pm 3.17$ & $0.4040 \pm 0.0641$ & $18.78 \pm 5.95$ & $0.4009 \pm 0.2972$ & $0.10$ \\
MFA & $20.72 \pm 3.64$ & $0.2115 \pm 0.0987$ & $19.18 \pm 5.17$ & $0.5264 \pm 0.3351$ & $0.10$ \\  
CSA & $19.45 \pm 2.81$ & $0.1044 \pm 0.0649$ & $38.04 \pm 13.42$ & $0.9430 \pm 0.1246$ & $0.10$ \\  \hline
RMIST & $33.98 \pm 8.78$ & $0.5965 \pm 0.2731$ & $18.20 \pm 2.06$ & $0.2805 \pm 0.1160$ & $0.10$ \\
ViT & $21.38 \pm 2.56$ & $0.5026 \pm 0.1728$ & $18.70 \pm 1.33$ & $0.3255 \pm 0.0745$ & $0.10$ \\
Deep2S & $18.10 \pm 1.17$ & $0.5433 \pm 0.0831$ & $11.70 \pm 1.72$ & $0.0897 \pm 0.0789$ & $0.10$ \\
Deep2SP+ & $18.03 \pm 1.04$ & $0.5427 \pm 0.0684$ & $9.75 \pm 4.38$ & $0.1057 \pm 0.0709$ & $0.10$ \\
CV-Deep2S & $21.53 \pm 2.58$ & $0.7726 \pm 0.0970$ & $9.76 \pm 4.03$ & $0.0545 \pm 0.0657$ & $0.10$ \\
\bottomrule
\end{tabular}
\end{table*}

\begin{figure*}[t]
\centering
\includegraphics[width=\textwidth,
trim=1pt 1pt 1pt 1pt,clip]{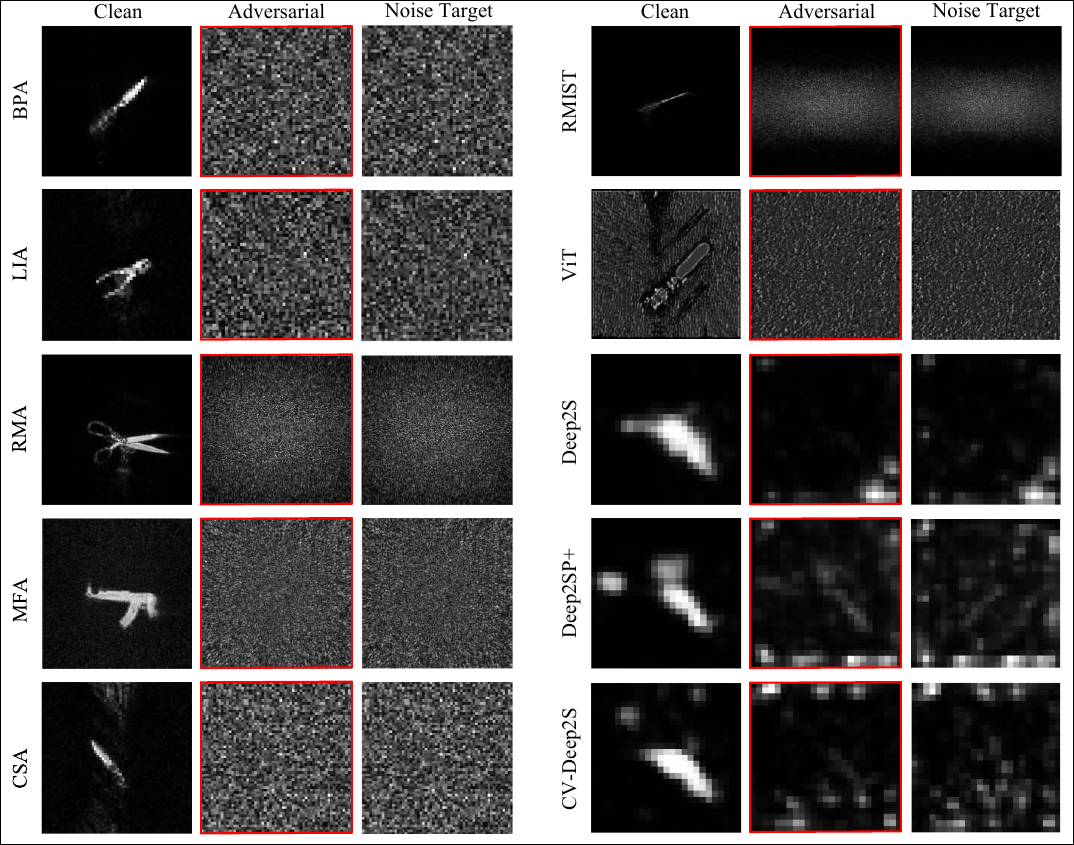}
\caption{Representative results of the Target Conceal Attack.  The clean reconstruction  $\hat{\boldsymbol{\alpha}}$ (left), the adversarial reconstruction $\hat{\boldsymbol{\alpha}}_{\mathrm{adv}}$(center), and the corresponding attack target $\hat{\boldsymbol{\alpha}}_{\mathrm{tgt}}$(right) are shown for each imaging algorithm. Across different imaging algorithms, the injected waveform effectively removes salient target features while maintaining visually plausible background structure.}

\label{fig:conceal_atk}
\end{figure*}


\begin{table*}[t]
\centering
\caption{Performance of the Target Swap Attack without power constraints. The adversarial images are highly similar to the target images, indicating successful swapping.}
\label{tab:target_swap_nopower}
\small
\setlength{\tabcolsep}{5pt}
\renewcommand{\arraystretch}{1.08}

\begin{tabular}{l|ccccc}
\toprule
Alg. & PSNR$_{AC}$ & SSIM$_{AC}$ & PSNR$_{AT}$ & SSIM$_{AT}$ & $P_a/P_s$ \\
\midrule
\footnotesize
BPA & $10.23 \pm 3.62$ & $ 0.2715 \pm 0.1421$ & $106.25 \pm 9.44$ & $ 1.0000 \pm 0.0000$ & $0.0960 \pm 0.0660$ \\
LIA & $10.65 \pm 3.62$ & $0.2824 \pm 0.1370$ & $34.05 \pm 3.33$ & $0.9706 \pm 0.0119$ & $0.0948 \pm 0.0671$ \\
RMA & $8.57 \pm 5.08$ & $0.3201 \pm 0.1484$ & $41.38 \pm 6.74$ & $0.9499 \pm 0.0795$ & $0.7802 \pm 0.6046$ \\
MFA & $9.86 \pm 5.13$ & $0.3189 \pm 0.0975$ & $32.22 \pm 4.72$ & $0.8763 \pm 0.0901$ & $0.8993 \pm 0.6911$ \\
CSA & $11.84 \pm 6.16$ & $0.2268 \pm 0.1310$ & $57.62 \pm 10.95$ & $0.9714 \pm 0.0395$ & $0.0634 \pm 0.0487$ \\
\hline
RMIST & $14.38 \pm 2.07$ & $0.3862 \pm 0.1100$ & $23.77 \pm 4.31$ & $0.6536 \pm 0.1532$ & $2.1740 \pm 0.9770$ \\
ViT & $18.55 \pm 1.00$ & $0.2231 \pm 0.0534$ & $28.05 \pm 1.55$ & $0.7979 \pm 0.0423$ & $1.3186 \pm 1.0774$ \\
Deep2S & $18.42 \pm 1.41$ & $0.6088 \pm 0.0323$ & $32.49 \pm 3.10$ & $0.9022 \pm 0.0450$ & $1.2362 \pm 0.4957$ \\
Deep2SP+ & $19.94 \pm 1.98$ & $0.6701 \pm 0.0445$ & $32.36 \pm 3.63$ & $0.8954 \pm 0.0444$ & $1.1779 \pm 0.4537$ \\
CV-Deep2S & $17.99 \pm 2.73$ & $0.6098 \pm 0.1141$ & $31.10 \pm 3.20$ & $0.8885 \pm 0.0428$ & $1.5233 \pm 0.8733$ \\
\bottomrule
\end{tabular}

\end{table*}

\begin{figure*}[t]
\centering
\includegraphics[width=\textwidth,
trim=1pt 1pt 1pt 1pt,clip]{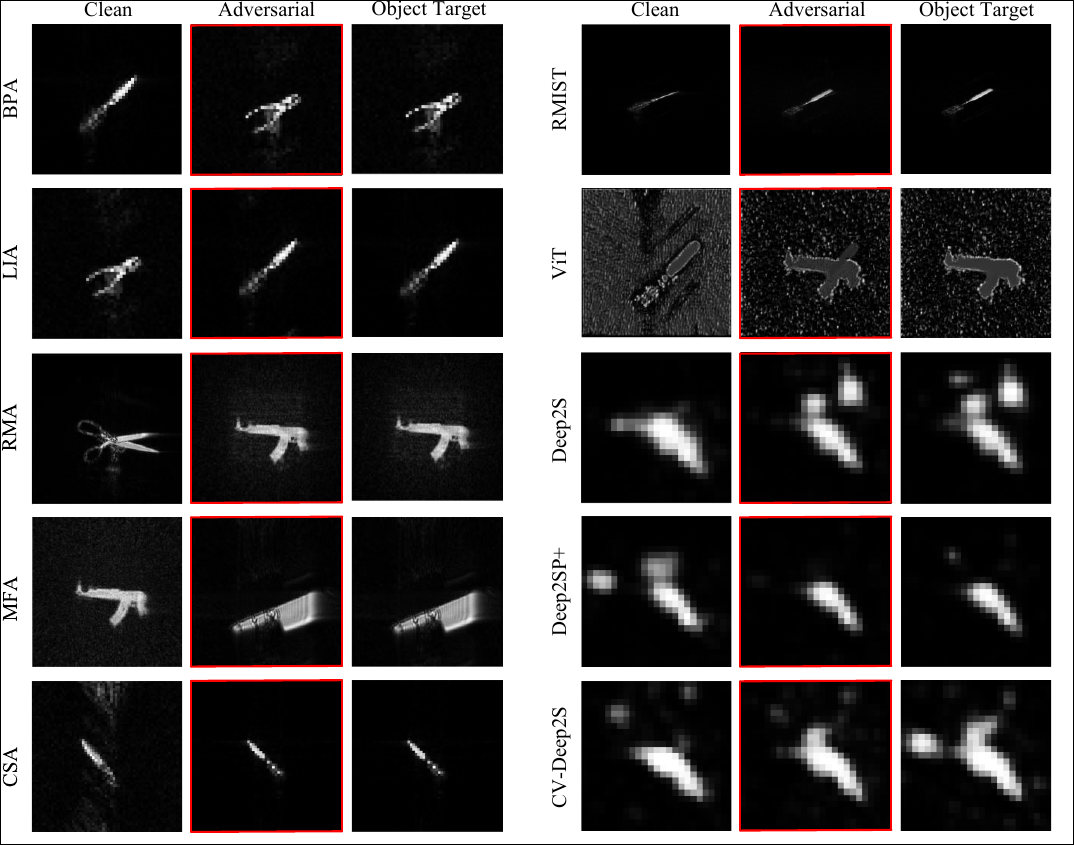}
\caption{Representative results of the Target Swap Attack. Each row corresponds to a different imaging algorithm, showing the clean reconstruction (left), the adversarial reconstruction (center), and the attacker's target image (right). The reconstructed adversarial images closely resemble the attacker-defined target images, demonstrating precise control over image content through waveform-domain manipulation.}

\label{fig:swap_atk}
\end{figure*}


\begin{table*}[t]
\centering
\caption{Performance of the Target Swap Attack under a power constraint ($P_a/P_s = 0.1$). The attack remains effective for classical algorithms, while learning-based methods exhibit reduced similarity to the target image, indicating better robustness.}
\label{tab:target_swap_power}
\small
\setlength{\tabcolsep}{5pt}
\renewcommand{\arraystretch}{1.08}
\begin{tabular}{l|ccccc}
\toprule
Alg. & PSNR$_{AC}$ & SSIM$_{AC}$ & PSNR$_{AT}$ & SSIM$_{AT}$ & $P_a/P_s$ \\
\midrule
\footnotesize
BPA & $10.96 \pm 3.84$ & $0.3275 \pm 0.1780$ & $83.97 \pm 40.43$ & $0.9552 \pm 0.0724$ & $0.10$ \\
LIA & $11.37 \pm 3.87$ & $0.3334 \pm 0.1706$ & $31.10 \pm 4.54$ & $0.9084 \pm 0.1220$ & $0.10$ \\
RMA & $12.73 \pm 5.27$ & $0.3777 \pm 0.1840$ & $19.66 \pm 6.23$ & $0.7871 \pm 0.1065$ & $0.10$ \\
MFA & $17.85 \pm 4.64$ & $0.4792 \pm 0.2124$ & $18.77 \pm 5.30$ & $0.7539 \pm 0.1223$  & $0.10$ \\
CSA & $12.20 \pm 6.45$ & $0.2707 \pm 0.1288 $ & $51.15 \pm 16.71$ & $0.9466 \pm 0.0969$ & $0.10$ \\
\hline
RMIST & $26.07 \pm 2.53$ & $0.8135 \pm 0.0678$ & $14.60 \pm 1.57$ & $0.5543 \pm 0.0760$ &  $0.10$\\
ViT & $19.56 \pm 0.91$ & $0.3481 \pm 0.1298$ & $24.95 \pm 2.59$ & $0.6010 \pm 0.1072$ & $0.10$ \\
Deep2S & $23.48 \pm 1.73$ & $0.8318 \pm 0.0841$ & $21.32 \pm 2.56$ & $0.7312 \pm 0.0475$ & $0.10$ \\
Deep2SP+ & $23.71 \pm 2.47$ & $0.8406 \pm 0.0815$ & $24.11 \pm 3.78$ & $0.7983 \pm 0.0438$ &   $0.10$\\
CV-Deep2S & $25.52 \pm 3.81$ & $0.8778 \pm 0.0745$ & $19.30 \pm 3.24$ & $0.7559 \pm 0.1195$ &  $0.10$ \\
\bottomrule
\end{tabular}
\end{table*}


\begin{table}[t]
\centering
\caption{Performance of the Randomization Attack using randomly generated weights $\mathbf{w}$. Waveform injection significantly degrades reconstruction quality, demonstrating that mmWave imaging is sensitive to even the random perturbations.}
\label{tab:target_conceal_random}
\small
\setlength{\tabcolsep}{5pt}
\renewcommand{\arraystretch}{1.08}
\begin{tabular}{l|ccc}
\toprule
Alg. & PSNR$_{AC}$ & SSIM$_{AC}$ & $P_a/P_s$ \\
\midrule
\footnotesize
BPA & $1.20 \pm 5.77$ & $0.0106 \pm 0.0102$ & $10$ \\ 
LIA & $1.51 \pm 5.54$ & $0.0110 \pm 0.0106$ & $10$ \\
RMA & $9.05 \pm 7.38$ & $0.0120 \pm 0.0101$ & $10$ \\
MFA & $1.70 \pm 6.96$ & $0.0054 \pm 0.0059$ & $10$ \\
CSA & $1.89 \pm 7.39$ & $0.0083 \pm 0.0096$ & $10$ \\
\hline
RMIST & $11.70 \pm 8.48$ & $0.1558 \pm 0.1384$ & $10$ \\
ViT  & $14.72 \pm 1.47$ & $0.0819 \pm 0.0349$ & $10$ \\
Deep2S & $14.10 \pm 1.32$ & $0.2301 \pm 0.0898$ & $10$ \\
Deep2SP+ & $15.31 \pm 1.49$ & $0.3210 \pm 0.0919$ & $10$ \\
CV-Deep2S & $13.47 \pm 2.28$ & $0.2373 \pm 0.0964$ & $10$ \\
\bottomrule
\end{tabular}
\end{table}

\begin{figure}[tbp]
\centerline{
\includegraphics[width=0.5\textwidth, trim=1pt 1pt 1pt 1pt, clip]{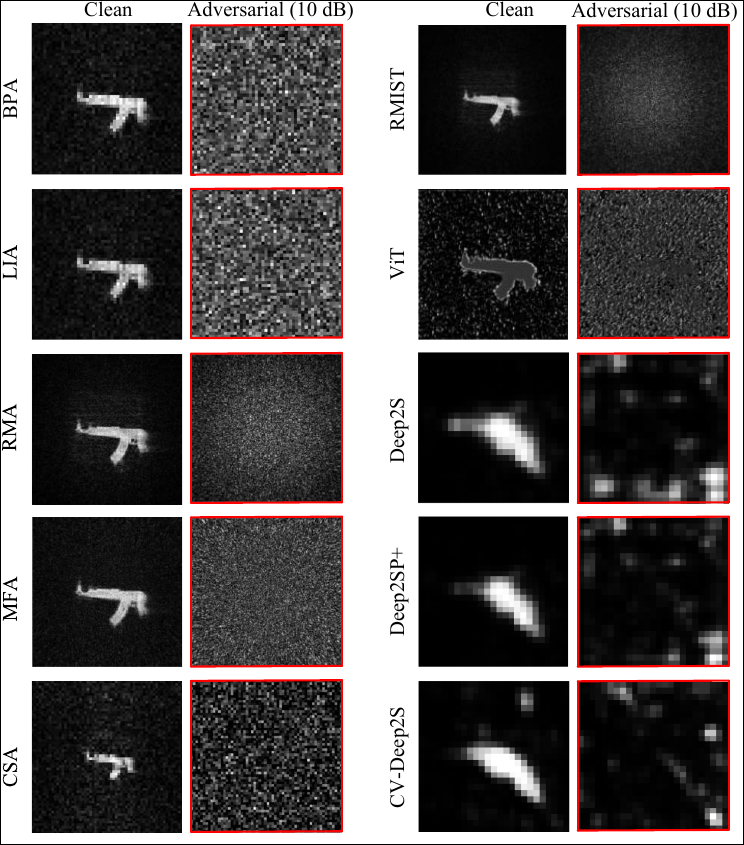}
}
\caption{Representative results of the Randomization Attack. Using randomly generated attack weights, the adversary induces blank reconstructed images similar to those of target conceal attacks.}
\label{fig:random_attack_success}
\end{figure}


\section{Conclusion} \label{sec:conclusion}

This paper presents a systematic study of the adversarial robustness of near-field mmWave imaging algorithms under waveform-domain physical attacks. A differential imaging attack framework is developed to optimize injected adversarial waveforms. To enable experimentally grounded evaluation, a dataset of real measured clean imaging waveforms and attack waveforms is constructed using a mmWave imaging testbed. The proposed framework is applied to 10 representative imaging algorithms.
The results demonstrate that near-field mmWave imaging algorithms are vulnerable to adversarial waveform injection, and that targeted manipulation of reconstructed image content is feasible. Learning-based imaging algorithms generally exhibit greater robustness than classical algorithms. 
In safety-critical applications such as airport passenger screening, such vulnerabilities could allow an adversary to conceal prohibited objects or inject false targets, posing significant risks to public security.

The vulnerability identified in this paper motivates the development of robustness-enhancing defense mechanisms for mmWave imaging systems. Potential defenses include waveform authentication or randomization to prevent attackers from accurately synchronizing injected signals, signal-level anomaly detection to identify inconsistencies in the received echoes across aperture looks, multi-sensor consistency checks to detect physically implausible reconstructions, and robust reconstruction algorithms that explicitly model adversarial perturbations, similar in spirit to adversarial training. Developing practical and provably robust defenses in the waveform domain remains an important direction for future work.

\appendix

\subsection{Derivation of mmWave Imaging Data Model (\ref{y=Hxmodel})}
\label{appendix:yHxmodel}

As described in Section \ref{subsec:forward} and Fig.~\ref{fig:sar_sat_img_2}, the imaging sensor moves over a planar aperture at $z=0$ with aperture locations $\mathbf r'=(x',y',0)$. The target reflectivity distribution lies on the plane $z=z_0$ and is denoted by $\sigma(\mathbf r)$ at location $\mathbf r=(x,y,z_0)$. 

At each aperture look $\ell$, the transmit antenna is located at $\mathbf r'_{\mathrm T,\ell}$ and the receive antenna at $\mathbf r'_{\mathrm R,\ell}$. Under the Born (single-scattering) approximation, the received signal at aperture look $\ell$ is given by \eqref{eq:rf_mix_new}, and the bistatic delay is
\begin{equation}
\tau_\ell(\mathbf r)
=
\frac{
|\mathbf r-\mathbf r'_{\mathrm T,\ell}|
+
|\mathbf r-\mathbf r'_{\mathrm R,\ell}|
}{c}
\stackrel{\triangle}{=}
\frac{R_{\mathrm T}(\mathbf{r})+R_{\mathrm R}(\mathbf{r})}{c},
\label{eq:prop_delay}
\end{equation}
where $c$ is the speed of light, and $R_T(\mathbf{r})$ and $R_R(\mathbf{r})$ denote the distances from the transmit and receive antennas to the target, respectively.

The conjugate of the received signal is mixed with the transmitted waveform: 
\begin{eqnarray}
\hat{s}_\ell(t)
&=&
{s}_\ell^*(t)\,p(t)  \nonumber \\
&=&
\iiint
\sigma(\mathbf r)\,
p^*(t-\tau_\ell(\mathbf{r}))p(t)
d\mathbf r
+
\tilde{\nu}_\ell(t).
\label{eq:dechirp_proc1}
\end{eqnarray}

Using the FMCW identity, we obtain
\begin{align}
p^*&(t-\tau_\ell(\mathbf{r}))p(t) \nonumber \\
&=
\exp\!\left\{
j2\pi f_0\tau_\ell(\mathbf{r})
+j2\pi K\tau_\ell(\mathbf{r}) t-j\pi K\tau_\ell^2(\mathbf{r})
\right\}.
\end{align}

Neglecting the term $-j\pi K\tau_\ell^2(\mathbf{r})$ \cite{wang2014application} and taking the Fourier transform with respect to $t$ produces peaks at the beat frequency $K\tau_\ell(\mathbf r)$. Selecting the frequency bin corresponding to the target plane $z=z_0$ yields
\begin{equation}
y_\ell
=
\iiint
\sigma(\mathbf r)\,
\frac{
\exp\!\big(j2\pi f_0\tau_\ell(\mathbf r)\big)
}{
R_{\mathrm T}(\mathbf r)\,R_{\mathrm R}(\mathbf r)
}
\,d\mathbf r
+
v_\ell,
\label{eq:range_compressed}
\end{equation}
where the amplitude term arises from spherical spreading.

Using \eqref{eq:prop_delay}, the phase term becomes $\exp\!\big(j2k(R_{\mathrm T}(\mathbf{r})+R_{\mathrm R}(\mathbf{r}))\big)$. 
Define the equivalent one-way propagation distance as 
\begin{equation}
R_0(\mathbf r'_\ell,\mathbf r)
=
\frac{
R_{\mathrm T}(\mathbf r)
+
R_{\mathrm R}(\mathbf r)
}{2}.
\end{equation}

Then \eqref{eq:range_compressed} can be rewritten as
\begin{equation}
y_\ell
=
\iiint
\sigma(\mathbf r)\,
\frac{
\exp\!\big(j2kR_0(\mathbf r'_\ell,\mathbf r)\big)
}{
R_0^2(\mathbf r'_\ell,\mathbf r)
}
\,d\mathbf r
+
v_\ell.
\end{equation}

Since the reflectivity is confined to the plane $z=z_0$, we write $\sigma(\mathbf r)=\sigma(x,y)\,\delta(z-z_0)$, which yields
\begin{equation}
y_\ell
=
\iint
\sigma(x,y)\,
\frac{
\exp\!\big(j2kR_0(\mathbf r'_\ell,\mathbf r)\big)
}{
R_0^2(\mathbf r'_\ell,\mathbf r)
}
\,dx\,dy
+
v_\ell.
\label{eq:continuous_model}
\end{equation}

Discretizing the imaging region into $N$ pixels centered at $\{\mathbf r_n\}_{n=0}^{N-1}$ and denoting the reflectivity by $\alpha_n$, we approximate the integral in \eqref{eq:continuous_model} by a Riemann sum:
\begin{equation}
y_\ell
\approx
\sum_{n=0}^{N-1}
\alpha_n\,
\frac{
\exp\!\big(j2kR_0(\mathbf r'_\ell,\mathbf r_n)\big)
}{
R_0^2(\mathbf r'_\ell,\mathbf r_n)
}
+
v_\ell.
\label{eq:dechirp_proc2}
\end{equation}

Stacking all $L$ aperture-look samples $y_\ell$, $\ell=0, \cdots, L-1$, yields the linear system \eqref{y=Hxmodel}. The matrix $\mathbf H$ represents the exact near-field propagation operator under the Born approximation.

\subsection{Proof of Theorem \ref{theorem1} of Section \ref{subsec:rf_injection} }
\label{app:delay_equiv}

Accurate frequency and timing synchronization are required for the attack waveform $p_\ell^{\mathrm{atk}}(t)$ to be correctly received and processed by the imaging sensor \cite{miura2019low, ordean2022millimeter}. First, given the clean sensing waveform $p(t)$ in \eqref{eq:clean_tx_wave}, the attack waveform must share the same start frequency $f_0$ and chirp slope $K$.

Next, consider the propagation delay $\tau_{\ell}(\mathbf{r})$ of the clean echo defined in \eqref{eq:prop_delay}. The propagation delay of the attack signal is $\tau_\ell^{\mathrm{atk}}(\mathbf{r}_A) = |\mathbf{r}_A - \mathbf{r}'_{R,\ell}|/c$, where $\mathbf{r}_A$ denotes the attacker location. Since $\tau_\ell(\mathbf{r}) \neq \tau_\ell^{\mathrm{atk}}(\mathbf{r}_A)$, the attack waveform must be time-shifted to align with the desired frequency bin corresponding to the target plane $z=z_0$. This is achieved by transmitting $p(t-\beta_\ell)$, where
\begin{equation}
    \beta_\ell = \tau_{\ell}(\mathbf{r}) - \tau_\ell^{\mathrm{atk}}(\mathbf{r}_A). 
\end{equation}

In addition, a complex weight $c_\ell$ is applied to control the amplitude and phase, resulting in the delay-and-weight model
\begin{equation}
p^{\mathrm{atk}}_{\ell}(t)=c_{\ell}\,p(t-\beta_{\ell}). 
\label{eq:atk_delay_weight}
\end{equation}

A time-shifted FMCW chirp can be equivalently expressed as the original chirp multiplied by a complex constant and a single-tone modulation:
\begin{align}
p(t-\beta_{\ell})
&=\exp\!\left\{j2\pi\!\left(f_0(t-\beta_{\ell})+\tfrac{1}{2}K(t-\beta_{\ell})^2\right)\right\} \nonumber\\
&=p(t)\,\exp\!\left\{j2\pi\!\left(-\beta_{\ell}(f_0+Kt)+\tfrac{1}{2}K\beta_{\ell}^2\right)\right\} \nonumber\\
&=p(t)\,e^{j\varphi_\ell}\,e^{-j2\pi \delta_\ell t},
\label{eq:chirp_shift_equiv}
\end{align}
where
\begin{equation}
\delta_\ell \triangleq K\beta_\ell, \;\;\;\;
\varphi_\ell \triangleq -2\pi f_0\beta_\ell + \pi K\beta_\ell^2.
\end{equation}

Letting $w_\ell = c_\ell e^{j\varphi_\ell}$ yields \eqref{eq:atk_tx_new}. Following the same processing steps as in \eqref{eq:dechirp_proc1}--\eqref{eq:dechirp_proc2}, we obtain \eqref{eq:atk_sig_after_proc}.

The delay $\beta_\ell$ can be estimated from the timing and geometric configuration of the imaging sensor, attacker, and target. Precomputing $\beta_\ell$ and $\delta_\ell$ simplifies the optimization of $w_\ell$. Otherwise, incorporating the FMCW dechirping process into the optimization would significantly increase complexity and may lead to numerical instability due to discrete frequency bin selection.

\bibliographystyle{IEEEtran}
\bibliography{mybibs, mybib}

\end{document}